\documentclass[letterpaper,aps,prl,floatfix,twocolumn,tightenlines,amsmath,amssymb]{revtex4-1}

\usepackage{graphicx}
\usepackage{amssymb}
\usepackage{color}
\usepackage{psfrag}
\usepackage{ifsym}
\usepackage{epstopdf}
\usepackage{dsfont}
\usepackage{amsmath}
\usepackage{multirow} 
\usepackage{paralist}

\usepackage{amsthm}
\usepackage{enumitem}

\usepackage[normalem]{ulem}

\newcommand{\ket}[1] {| #1 \rangle}

\newcommand{\ketbra}[1]{ | #1 \rangle\!\langle #1 |}

\newcommand{\ie} {\emph{i.e.}}

\newcommand{\moy}[1]{\langle #1 \rangle}

\newcommand{\one}{\leavevmode\hbox{\small1\normalsize\kern-.33em1}}

\newcommand{\ba}{\begin{eqnarray}}
\newcommand{\ea}{\end{eqnarray}}
\newtheorem{thm}{Theorem}

\newtheorem*{lem*}{Lemma}

\begin{document}

\title{Nonlinear Bell inequalities tailored for quantum networks}

\author{Denis Rosset$^{1}$, Cyril Branciard$^{2}$, Tomer Jack Barnea$^{1}$, Gilles P\"utz$^{1}$, Nicolas Brunner$^{3}$, Nicolas Gisin$^{1}$}
\affiliation{$^1$Group of Applied Physics, Universit\'e de Gen\`eve, 1211 Gen\`eve, Switzerland \\ $^{2}$Institut N\'eel, CNRS and Universit\'e Grenoble Alpes, 38042 Grenoble Cedex 9, France
\\ $^{3}$D\'epartement de Physique Th\'eorique, Universit\'e de Gen\`eve, 1211 Gen\`eve, Switzerland}

\date{\today}

\begin{abstract}
In a quantum network, distant observers sharing physical resources emitted by independent sources can establish strong correlations, which defy any classical explanation in terms of local variables. We discuss the characterization of nonlocal correlations in such a situation, when compared to those that can be generated in networks distributing independent local variables.
We present an iterative procedure for constructing Bell inequalities tailored for networks: starting from a given network, and a corresponding Bell inequality, our technique provides new Bell inequalities for a more complex network, involving one additional source and one additional observer. The relevance of our method is illustrated on a variety of networks, for which we demonstrate significant quantum violations.
\end{abstract}

\maketitle

Distant observers performing local measurements on a shared entangled quantum state can observe strong correlations, which have no equivalent in classical physics. This phenomenon, termed quantum nonlocality~\cite{bell64,review}, is at the core of quantum theory and represents a key resource for quantum information processing~\cite{acin,CC}.

This remarkable feature is now relatively well understood in the case of observers sharing entangled states originating from a single common source, for which a solid theoretical framework has been established~\cite{review}, and many classes of Bell inequalities have been derived; see e.g.~\cite{facets}. The situation is however very different in the case of quantum networks, which has been far less explored so far. A quantum network features distant observers, as well as several independent quantum sources distributing entangled states to different subsets of observers (see Fig.~\ref{fig:Nplus1loc}). Crucially, by performing joint measurements, observers can correlate distant (and initially fully independent) quantum systems, hence establishing strong correlations across the entire network. 
Characterizing and detecting the nonlocality of such correlations represents a fundamental challenge, which is also highly relevant to the implementation of quantum networks~\cite{kimble} and quantum repeaters~\cite{sangouard}. Only few exploratory works have discussed nonlocal correlations in the simplest networks, such as the scenario of entanglement swapping~\cite{bilocPRL,bilocPRA} and star-shaped networks~\cite{tavakoli}. Others suggested approaching the problem using the framework of causal inference~\cite{fritz,chaves,chaves1,wood,pusey,chaves2}. The communication cost of simulating quantum correlations in entanglement swapping was also discussed~\cite{branciard}. However, it is fair to say that adequate methods are still currently lacking for discussing nonlocal correlations in networks beyond the simplest possible cases.

In this work, we present a simple and efficient method for detecting and characterizing nonlocal correlations in a wide class of networks. Specifically, we give an iterative procedure for constructing Bell inequalities tailored for networks---that is, inequalities satisfied by any correlations generated in a local model that matches the structure of the network under study, with independent random variables for each independent source. Starting from a given network, and a Bell inequality for it, we then construct inequalities for a more complex network, involving one additional source and one additional observer. We illustrate the relevance of our approach considering a variety of simple networks, and demonstrate significant violations in quantum theory. We believe that the simplicity and versatility of our method makes it adequate for starting a systematic exploration of quantum nonlocality in networks.

\begin{figure}[t!]
 \includegraphics[width=\columnwidth]{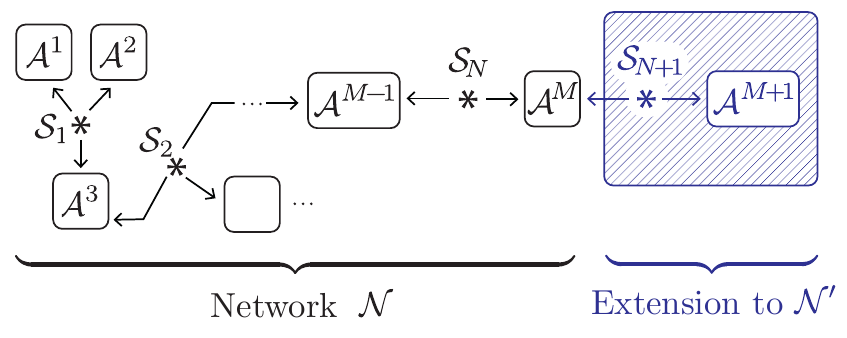}   
\caption{
We consider networks consisting of distant observers $\mathcal A^{j}$ sharing physical resources emitted from independent sources $\mathcal S_{k}$, and discuss nonlocality in such networks. In our approach, starting from a network $\mathcal N$ (in black) with $N$ sources and $M$ parties, we define a new network $\mathcal N'$ by adding a new independent source $\mathcal S_{N+1}$ connected to a single party $\mathcal A^{M}$ of $\mathcal N$ and to a new party $\mathcal A^{M+1}$ (in blue). We show how Bell inequalities for so-called $\mathcal N'$-local correlations can be derived starting from Bell inequalities for $\mathcal N$-local correlations.
}
\label{fig:Nplus1loc}
\end{figure}

\paragraph{Scenario of $\mathcal N$-locality.---} 

Consider a network $\mathcal N$ consisting of $N$ independent sources $\mathcal S_1, \ldots, \mathcal S_N$ sending physical systems to $M$ parties $\mathcal A^1, \ldots, \mathcal A^M$ (see Fig.~\ref{fig:Nplus1loc}). Each party thus holds a number of systems, and performs a measurement on them (assumed here to be binary). Specifically, we denote by $x^j$ the input received by party $\mathcal A^j$, and by $a^j_{x^j} = \pm 1$ its corresponding output. 

Our goal is to capture the strength of correlations that can be established in a network $\mathcal N$ for different types of resources. In particular, we want to compare the correlations established in the case of a quantum network (\ie, with quantum sources, and with the parties performing quantum measurements), to those that can arise in local (hidden) variable models. Importantly the latter should feature the same network structure as $\mathcal N$, with independent sources of local variables, and are thus referred to as $\mathcal N$-local models. This represents the natural generalization of the notions of Bell locality~\cite{bell64,review} (tailored for the case of a single source), and `bilocality'~\cite{bilocPRL,bilocPRA} (tailored for the scenario of entanglement swapping with two independent sources), to arbitrary networks. 

More formally, we associate to each source $\mathcal S_i$ a random local variable $\lambda_i$, which is sent to all parties connected to $\mathcal S_i$ in the network $\mathcal N$. The crucial assumption of $\mathcal N$-locality is that all $\lambda_i$'s are independent from one another, that is, $\rho(\lambda_1, \ldots \lambda_N) = \prod_i \rho_i(\lambda_i)$, for some (nonnegative and normalised) distributions $\rho_i(\lambda_i)$ over some sets $\Lambda_i$. We denote by $\vec\lambda_{\mathcal A^j}$ the list of random variables $\lambda_i$'s `received' by party $\mathcal A^j$. Then the ($M$-partite) joint probability distribution $P(a^1, \ldots, a^M|x^1, \ldots, x^M)$ (where we have omitted redundant subscripts) is $\mathcal N$-local if and only if it can be decomposed as
\ba
&& P(a^1, \ldots, a^M|x^1, \ldots, x^M) \nonumber \\[1mm]
&& \qquad = \int_{\Lambda_1} \!\!\!\!\! \text{d}\lambda_1 \, \rho_1(\lambda_1) \ldots \!\! \int_{\Lambda_N} \!\!\!\!\!\! \text{d}\lambda_N \, \rho_N(\lambda_N) \nonumber \\[-1mm]
&& \qquad \quad \qquad P(a^1|x^1\!,\vec\lambda_{\mathcal A^1}) \ldots P(a^M|x^M\!,\vec\lambda_{\mathcal A^M}) \, , \quad \label{def:Nloc}
\ea
where each $P(a^j|x^j\!,\vec\lambda_{\mathcal A^j})$ is a valid probability distribution, which (without loss of generality) can be assumed to be deterministic.
As we focus on binary measurements, it is convenient to consider correlators, \ie, the expectation values $\moy{a^1_{x^1} a^2_{x^2} \ldots a^M_{x^M}} $. In a $\mathcal N$-local model, these can be written as
\ba
&& \moy{a^1_{x^1}\ldots a^M_{x^M}} = \int_{\Lambda_1} \!\!\!\!\! \text{d}\lambda_1 \, \rho_1(\lambda_1) \ldots \!\! \int_{\Lambda_N} \!\!\!\!\!\! \text{d}\lambda_N \, \rho_N(\lambda_N) \nonumber \\[-1mm]
&& \hspace{4cm} a^1_{x^1}(\vec\lambda_{\mathcal A^1}) \ldots a^M_{x^M}(\vec\lambda_{\mathcal A^M}), \qquad
\ea
for some deterministic response functions $a^j_{x^j}(\vec\lambda_{\mathcal A^j}) = \pm 1$ of the party's input $x^j$ and of the random variables $\vec\lambda_{\mathcal A^j}$.

Characterizing the set of $\mathcal N$-local correlations is a challenging problem. The main technical difficulty, for cases beyond that of standard Bell locality, originates from the independence of the sources, which makes the set non-convex. Here we will present a simple and efficient technique for generating Bell inequalities tailored for the problem of capturing $\mathcal N$-local correlations. Hence a violation of such inequalities, which is usually possible considering quantum networks, certifies that no $\mathcal N$-local model can reproduce the given correlations. Below we state our main result, which is an iterative procedure for constructing Bell inequalities for $\mathcal N$-local correlations. We then illustrate the relevance of our method by applying it to simple networks, and discuss quantum violations.

\emph{Main result.---} Consider a network $\mathcal N$, and a Bell inequality tailored for it.
From $\mathcal N$, we construct a new network $\mathcal N'$ by adding one source, $\mathcal S_{N+1}$, linked to just one party of $\mathcal N$, say $\mathcal A_M$, and to one new party, $\mathcal A^{M+1}$ (see Fig.~\ref{fig:Nplus1loc}).
The new party $\mathcal A^{M+1}$ gets an input $x^{M+1}$, which we choose to be binary ($x^{M+1} = 0,1$), and gives a binary output $a^{M+1}_{x^{M+1}} = \pm 1$. Given a Bell inequality capturing $\mathcal N$-local correlations, we can now construct a Bell inequality tailored for $\mathcal N'$-local correlations using the following result:

\begin{thm} \label{thm}
Suppose that the correlators $\moy{a^1_{x^1}\ldots a^M_{x^M}}$ in any $\mathcal N$-local model satisfy a Bell inequality of the form
\ba
\sum_{x^1\!, \ldots, x^M} \beta_{x^1\!, \ldots, x^M} \ \moy{a^1_{x^1}\ldots a^M_{x^M}}  \ \le \ 1 \label{ineq:N_loc}
\ea
for some real coefficients $\beta_{x^1\!, \ldots, x^M}$. Then $\mathcal N'$-local correlations (for the network $\mathcal N'$ obtained from $\mathcal N$ as described above) satisfy the following constraint: either there exists $q \in \ ]0,1[$ such that for any partition of the set of party $\mathcal A^M$'s inputs into two disjoint subsets $\mathcal X^M_+$ and $\mathcal X^M_-$, we have
\ba
\frac{1}{q} \, \Sigma_{\mathcal X_+} \, + \, \frac{1}{1-q} \, \Sigma_{\mathcal X_-} \, \le \, 1 \label{ineq:Nplus1_loc}
\ea
for
\ba
\Sigma_{\mathcal X_\pm} =  \sum_{\stackrel{x^1\!, \ldots, x^{M-1}\!,}{x^M \in \mathcal X^M_\pm}} \beta_{x^1\!, \ldots, x^M} \ \moy{a^1_{x^1}\ldots a^M_{x^M} {\textstyle \frac{a^{M+1}_0 \pm a^{M+1}_1}{2}}} \, ; \quad
\ea
or $\Sigma_{\mathcal X_-} = 0$ and $\Sigma_{\mathcal X_+} \le 1$ for all $k$ and all $\mathcal X_\pm^M$; or $\Sigma_{\mathcal X_+} = 0$ and $\Sigma_{\mathcal X_-} \le 1$ for all $k$ and all $\mathcal X_\pm^M$.
\end{thm}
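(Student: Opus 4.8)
The plan is to reduce the $\mathcal N'$-local constraint to the \emph{deterministic} (vertex) version of the hypothesis~\eqref{ineq:N_loc}, exploiting that the new ``leaf'' party $\mathcal A^{M+1}$ is connected only to $\mathcal S_{N+1}$, so that its responses $a^{M+1}_0,a^{M+1}_1$ are functions of $\lambda_{N+1}$ alone. The starting observation is a pointwise dichotomy: for each value of $\lambda_{N+1}$ either $a^{M+1}_0=a^{M+1}_1$ or $a^{M+1}_0=-a^{M+1}_1$, so exactly one of $\tfrac{a^{M+1}_0+a^{M+1}_1}{2}$ and $\tfrac{a^{M+1}_0-a^{M+1}_1}{2}$ equals $\pm1$ while the other vanishes. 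This partitions $\Lambda_{N+1}$ into $\Lambda_+=\{a^{M+1}_0=a^{M+1}_1\}$ and $\Lambda_-=\{a^{M+1}_0=-a^{M+1}_1\}$; I would set $q=\int_{\Lambda_+}\mathrm d\lambda_{N+1}\,\rho_{N+1}(\lambda_{N+1})$, which is fixed by the model and is precisely the $q$ appearing in the statement (the endpoints $q=0,1$ give the two degenerate alternatives). On $\Lambda_\pm$ write $\sigma_\pm:=a^{M+1}_0$, so that on $\Lambda_+$ the ``$+$'' combination equals $\sigma_+\in\{\pm1\}$ and on $\Lambda_-$ the ``$-$'' combination equals $\sigma_-\in\{\pm1\}$.

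Next I would extract the deterministic form of the hypothesis. Since~\eqref{ineq:N_loc} holds for every $\mathcal N$-local model, in particular for point-mass (deterministic) ones, it yields $\sum_{x^1,\ldots,x^M}\beta_{x^1,\ldots,x^M}\,a^1_{x^1}\cdots a^M_{x^M}\le1$ for every assignment $a^j_{x^j}\in\{\pm1\}$. Because this Bell polynomial is multilinear, and in particular affine in the vector of $\mathcal A^M$'s responses, its maximum over the box $[-1,1]$ (with the other parties fixed at $\pm1$) is attained at a vertex; hence the bound persists when $\mathcal A^M$'s responses are allowed to take any values in $[-1,1]$.

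The heart of the argument is then to integrate out $\lambda_{N+1}$ first. Fixing the variables $\vec\lambda$ of $\mathcal N$, the correlator entering $\Sigma_{\mathcal X_+}$ is supported on $\Lambda_+$, and dividing by $q$ turns the restricted measure into the normalised conditional law $\hat\rho_+=\rho_{N+1}\mathbf 1_{\Lambda_+}/q$. Absorbing $\sigma_+$ and the $\lambda_{N+1}$-dependence of $a^M_{x^M}$ into an \emph{effective} response $A^+_{x^M}(\vec\lambda_{\mathcal A^M}):=\int\mathrm d\lambda_{N+1}\,\hat\rho_+\,\sigma_+\,a^M_{x^M}\in[-1,1]$, and likewise $A^-_{x^M}\in[-1,1]$ over $\Lambda_-$, I can write $\tfrac1q\Sigma_{\mathcal X_+}+\tfrac1{1-q}\Sigma_{\mathcal X_-}=\int\prod_{i\le N}\mathrm d\lambda_i\,\rho_i\,\Phi(\vec\lambda)$, where $\Phi$ is the Bell polynomial evaluated with the fixed $\pm1$ responses of parties $1,\ldots,M-1$ and with $\mathcal A^M$'s effective response taken to be $A^+_{x^M}$ for $x^M\in\mathcal X^M_+$ and $A^-_{x^M}$ for $x^M\in\mathcal X^M_-$. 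Since these effective responses lie in $[-1,1]$, the convex extension of the deterministic bound gives $\Phi(\vec\lambda)\le1$ pointwise, and integrating over the $\rho_i$ yields~\eqref{ineq:Nplus1_loc}. For the degenerate cases $q=0$ (resp.\ $q=1$) the ``$+$'' (resp.\ ``$-$'') combination vanishes almost surely, so $\Sigma_{\mathcal X_+}=0$ (resp.\ $\Sigma_{\mathcal X_-}=0$), and the remaining partial sum is bounded by $1$ by the same convexity argument, now setting the complementary effective responses to $0\in[-1,1]$.

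The main obstacle is conceptual rather than computational: one is tempted to realise $\tfrac1q\Sigma_{\mathcal X_+}+\tfrac1{1-q}\Sigma_{\mathcal X_-}$ as the Bell value of a single $\mathcal N$-local \emph{probabilistic} model and invoke~\eqref{ineq:N_loc} directly, but this fails because the weights $\tfrac1q,\tfrac1{1-q}$ come from conditioning on the disjoint events $\Lambda_\pm$ and cannot be reproduced by one shared distribution for $\mathcal A^M$'s private variable. The resolution---integrating out $\lambda_{N+1}$ to obtain $[-1,1]$-valued effective responses and then using the vertex form of the hypothesis extended by convexity---is the step that must be set up carefully, in particular checking that $A^\pm_{x^M}$ depends only on $\vec\lambda_{\mathcal A^M}$ and that the partial character of $\Sigma_{\mathcal X_\pm}$ (summing only over $\mathcal X^M_\pm$) is harmless.
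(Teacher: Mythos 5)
Your proposal is correct, and its core coincides with the paper's own proof (Appendix~A): you partition the hidden\hyp{}variable space of the new source according to whether $a^{M+1}_0=\pm a^{M+1}_1$, set $q$ equal to the weight of the ``$+$'' branch---so that $q$ is intrinsic to the model and therefore automatically uniform over all partitions $\mathcal X^M_\pm$---and renormalize by the conditional measures. Where you genuinely diverge is the finishing step. The paper keeps the conditioned variable fixed: for each pair $(\mu_+,\mu_-)$ it forms the hybrid $\pm1$-valued deterministic responses $\tilde a^M_{x^M\!,\mu_\pm}(\vec\lambda_{\mathcal A^M})=a^M_{x^M}(\vec\lambda_{\mathcal A^M},\mu_\pm)\,a^{M+1}_0(\mu_\pm)$, notes that these define a bona fide $\mathcal N$-local model to which the hypothesis~\eqref{ineq:N_loc} applies directly (Eq.~\eqref{ineq:BI_mu}), and only afterwards averages over $\mu_\pm$ with the conditional measures. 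You instead integrate the new source's variable out first, obtaining $[-1,1]$-valued effective responses $A^\pm_{x^M}(\vec\lambda_{\mathcal A^M})$, and then invoke only the \emph{deterministic} (vertex) form of~\eqref{ineq:N_loc}, extended to the box by affineness in party $\mathcal A^M$'s responses, pointwise in $\vec\lambda$. Both routes are sound for the statement at hand. Yours makes explicit that for a linear full-correlation hypothesis only its deterministic form is needed, and it treats the degenerate cases $q\in\{0,1\}$ more smoothly (the complementary effective responses are simply set to $0\in[-1,1]$, whereas the paper must average with sign-flipped versions of the inequality to stay within $\pm1$-valued responses). What the paper's ordering buys is that the hypothesis is only ever evaluated on genuine $\mathcal N$-local correlations; this is precisely what underlies the generalization in Theorem~\ref{thm2} and the claimed iterative use of the result, where the initial constraint is itself quantified and nonlinear. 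In that setting your key step breaks down: the $\mathcal N$-local set is non-convex, so a nonlinear or quantified constraint verified on $\pm1$ vertices does not extend to $[-1,1]$-valued responses by any multilinearity argument, and one must fall back on the paper's hybrid-model device.
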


In the present manuscript, we abuse the notation and write $q \in [0,1]$ to cover all cases; indeed, the particular cases where $\Sigma_{\mathcal X_\pm} = 0$ can easily be recovered in the limits $q \to 1$ or $q \to 0$. In Appendix~A we provide a more general statement of the above theorem---which allows one to consider several Bell inequalities at once and also allows for non-full-correlation terms in these inequalities---as well as a detailed proof.
Interestingly, the technique used in our proof also provides an original way to derive the simplest Bell inequality of Clauser-Horne-Shimony-Holt (CHSH)~\cite{chsh}, as discussed in Appendix~B.

\begin{figure*}[hbt!]
 \includegraphics{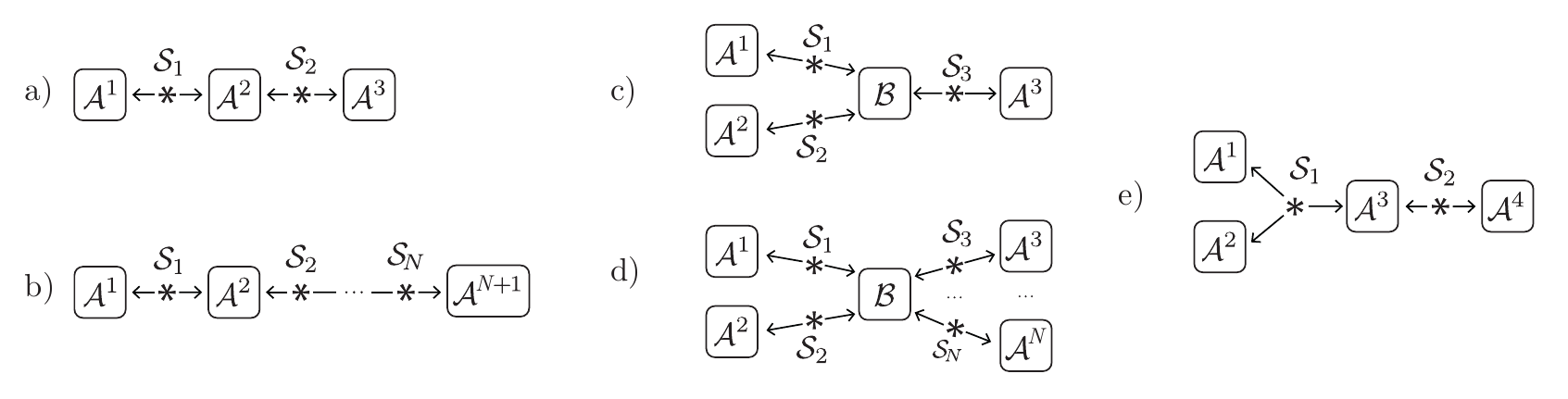}   
\caption{In the main text, we discuss a variety of networks: a) the scenario of bilocality, b) a general chain network featuring $N$ sources and $M=N+1$ observers, c) a 3 branch star network, d) a general star-shaped network with $N$ branches, and e) a network featuring a different topology, illustrating the versatility of our method.
}
\label{fig:various_networks}
\end{figure*} 

A remarkable feature of the `Bell inequality'~\eqref{ineq:Nplus1_loc} is that it involves the quantifier `$\exists \, q ...$'. As a consequence, despite its appearance it actually defines a nonlinear constraint on $\mathcal N'$-local correlations. One could eliminate the quantifier by minimizing the left-hand-side of Eq.~\eqref{ineq:Nplus1_loc} over $q$; this would indeed lead to explicitly nonlinear Bell inequalities (see below and Appendices~C--F). However, it will be convenient in general to keep these quantifiers (in a practical test, they could be eliminated later, by optimizing the parameters $q$ directly for the specific values of the observed statistics). In fact, Theorem~\ref{thm} also applies to an initial Bell inequality for $\mathcal N$-local correlations that features quantifiers itself. Our technique can therefore be used in an iterative manner, and allows one to construct Bell inequalities for a broad class of networks, as we shall see below.

\emph{Bilocality.---} Let us first apply the above method to the simplest non-trivial network $\mathcal N$ consisting of $M=2$ parties $\mathcal A^1$ and $\mathcal A^2$ connected to a single source $\mathcal S_1$, that is, the usual Bell scenario~\cite{bell64,review}. In that case, $\mathcal N$-local (\ie~here, simply `Bell local') correlations satisfy the well-known CHSH inequality~\cite{chsh}:
\ba
\moy{{\textstyle \frac{a^1_0+a^1_1}{2}} a^2_0} + \moy{{\textstyle \frac{a^1_0-a^1_1}{2}} a^2_1} \ \le \ 1 \, .
\ea

The network $\mathcal N'$, obtained by adding an independent source $\mathcal S_2$ linked to party $\mathcal A^2$ and to a new party $\mathcal A^3$, corresponds here to the scenario of `bilocality'~\cite{bilocPRL,bilocPRA}; see Fig.~\ref{fig:various_networks}a.
Applying Theorem~\ref{thm} starting from the CHSH inequality and with $\mathcal X^2_+ = \{0\}$ and $\mathcal X^2_- = \{1\}$, we find that $\mathcal N'$-local (\ie, bilocal) correlations satisfy the inequality 
\ba
&& \exists \ q \in [0,1] \ \mathrm{such \ that} \nonumber \\
&& \quad {\textstyle \frac{1}{q}} \moy{{\textstyle \frac{a^1_0+a^1_1}{2}} a^2_0 {\textstyle \frac{a_0^3 + a_1^3}{2}}} + {\textstyle \frac{1}{1-q}} \moy{{\textstyle \frac{a^1_0-a^1_1}{2}} a^2_1 {\textstyle \frac{a_0^3 - a_1^3}{2}}} \ \le \ 1 \, . \qquad \label{ineq:biloc}
\ea
It is still fairly easy, in this first example, to eliminate the quantifier. As we show in Appendix~C, this constraint (when combined with similar forms obtained from other versions of CHSH) is equivalent to the (nonlinear) `bilocal inequality' derived previously in~\cite{bilocPRA}. 

Next we discuss the quantum violation of the above Bell inequality, thus considering the entanglement swapping scenario. Assume that each source $\mathcal S_i$ ($i=1,2$) emits two particles in the 2-qubit Werner state $\varrho(v_i) = v_i \ketbra{\Phi^+} + (1{-}v_i) \one/4$, with $v_i \in [0,1]$, $\ket{\Phi^+} = \frac{1}{\sqrt{2}}(\ket{00}+\ket{11})$, and $\one/4$ the fully mixed state of two qubits. Moreover, the parties $\mathcal A^1$ and $\mathcal A^3$ perform single qubit projective measurements given by operators $\hat{a}_0^1 = \hat{a}_0^3 = \frac{\hat\sigma_{\textsc{z}}+\hat\sigma_{\textsc{x}}}{\sqrt{2}}$ (for $x^1,x^3 = 0$) or $\hat{a}_1^1 = \hat{a}_1^3 = \frac{\hat\sigma_{\textsc{z}}-\hat\sigma_{\textsc{x}}}{\sqrt{2}}$ (for $x^1,x^3 = 1$); here $\hat\sigma_{\textsc{z}}$ and $\hat\sigma_{\textsc{x}}$ are the Pauli matrices. Finally, the intermediate party $\mathcal A^2$ performs projective two-qubit measurements given by $\hat{a}_0^2 = \hat\sigma_{\textsc{z}} \otimes \hat\sigma_{\textsc{z}}$ (for $x^2 = 0$) or $\hat{a}_1^2 = \hat\sigma_{\textsc{x}} \otimes \hat\sigma_{\textsc{x}}$ (for $x^2 = 1$). Defining $V = v_1 v_2$, one finds
\ba
\moy{a_{x^1}^1 a_{x^2}^2 a_{x^3}^3} = (-1)^{x^1 x^2 + x^2 x^3} \, \frac{V}{2} \,, \label{eq:triloc:corr}
\ea
so that
\ba
\moy{{\textstyle \frac{a^1_0+a^1_1}{2}} a^2_0 {\textstyle \frac{a^3_0+a^3_1}{2}}} = \moy{{\textstyle \frac{a^1_0-a^1_1}{2}} a^2_1 {\textstyle \frac{a^3_0-a^3_1}{2}}} = \, \frac{V}{2} . \quad
\ea
Noting that $\min_{q \in [0,1]} \big( \frac{1}{q} \frac{V}{2} + \frac{1}{1-q} \frac{V}{2} \big) = 2 \, V$, we find that the quantum correlations thus obtained violate the Bell inequality~\eqref{ineq:biloc}---and hence is non-bilocal---for any $V > \frac12$, as already shown in~\cite{bilocPRL}.

\emph{Chain network.---} The above procedure can be iterated in order to characterize $\mathcal N$-local correlations on a one-dimensional chain network (see Fig.~\ref{fig:various_networks}b). First, starting from the previous bilocality network (with 2 sources and 3 parties), we add a new party $\mathcal A^4$, and a source $S_3$ connected to $\mathcal A^3$ and $\mathcal A^4$. 
Applying Theorem~\ref{thm} to the Bell inequality~\eqref{ineq:biloc}, and choosing $\mathcal X^3_+ = \{0\}$ and $\mathcal X^3_- = \{1\}$, we find that `trilocal' correlations satisfy the inequality 
\ba
&& \exists \ q, r \in [0,1] \ \mathrm{such \ that} \nonumber \\
&& \qquad \frac{1}{8}
	 \Big[ {\textstyle \frac{1}{q} \frac{1}{r}} \moy{(a^1_0{+}a^1_1) a^2_0 a^3_0 (a^4_0{+}a^4_1)} \nonumber \\
&& \qquad \quad + {\textstyle \frac{1}{q} \frac{1}{1-r}} \moy{(a^1_0{+}a^1_1) a^2_0 a^3_1 (a^4_0{-}a^4_1)} \nonumber \\
&& \qquad \quad + {\textstyle \frac{1}{1-q} \frac{1}{r}} \moy{(a^1_0{-}a^1_1) a^2_1 a^3_0 (a^4_0{+}a^4_1)} \nonumber \\
&& \qquad \quad - {\textstyle \frac{1}{1-q} \frac{1}{1-r}} \moy{(a^1_0{-}a^1_1) a^2_1 a^3_1 (a^4_0{-}a^4_1)} \Big] \ \le \ 1 \, . \label{ineq:triloc}
\ea
Note that it is in principle possible to write the above constraint without quantifiers, and end up with a nonlinear Bell inequality (as in the case of bilocality above). We discuss this operation in Appendix~D. However, in this case the nonlinear form appears to be extremely cumbersome and of no practical use. 


Next, we extend our analysis to chains of arbitrary lengths, focusing on linear Bell inequalities with quantifiers. 
By further iterating the argument, we obtain the following inequality for chains of $N$ independent sources and $M=N+1$ parties:
\ba
&& \exists \ q^2, \ldots, q^N \in [0,1] \ \mathrm{such \ that} \nonumber \\
&& \quad \frac{1}{2^N} \sum_{x^1, \ldots, x^{N+1}} {\textstyle \frac{1}{q^2_{x^2}} \ldots \frac{1}{q^N_{x^N}}} (-1)^{x^1 x^2 + x^2 x^3 + \ldots + x^N x^{N+1}} \nonumber \\[-4mm]
&& \hspace{5.2cm} \moy{a^1_{x^1} \ldots a^{N+1}_{x^{N+1}}} \ \le \ 1 , \nonumber \\ \label{ineq:Nloc_line}
\ea
with for each $j$, $q^j_0 = q^j$ and $q^j_1 = 1-q^j$.

Let us discuss quantum violations. Consider that each source $\mathcal S_i$ sends two particles in the Werner state $\varrho(v_i)$; party $\mathcal A^1$ measures either $\hat{a}_0^1 = \frac{\hat\sigma_{\textsc{z}}+\hat\sigma_{\textsc{x}}}{\sqrt{2}}$ or $\hat{a}_1^1 = \frac{\hat\sigma_{\textsc{z}}-\hat\sigma_{\textsc{x}}}{\sqrt{2}}$; parties $j$, with $2 \le j \le N$ and $j$ even, measure either $\hat{a}_0^j = \hat\sigma_{\textsc{z}} \otimes \hat\sigma_{\textsc{z}}$ or $\hat{a}_1^j = \hat\sigma_{\textsc{x}} \otimes \hat\sigma_{\textsc{x}}$; parties $j$, with $3 \le j \le N$ and $j$ odd, measure either $\hat{a}_0^j = \frac{\hat\sigma_{\textsc{z}}+\hat\sigma_{\textsc{x}}}{\sqrt{2}} \otimes \frac{\hat\sigma_{\textsc{z}}+\hat\sigma_{\textsc{x}}}{\sqrt{2}}$ or $\hat{a}_1^j = \frac{\hat\sigma_{\textsc{z}}-\hat\sigma_{\textsc{x}}}{\sqrt{2}} \otimes \frac{\hat\sigma_{\textsc{z}}-\hat\sigma_{\textsc{x}}}{\sqrt{2}}$; for $N$ even, party $\mathcal A^{N+1}$ measures either $\hat{a}_0^{N+1} = \hat{a}_0^1 = \frac{\hat\sigma_{\textsc{z}}+\hat\sigma_{\textsc{x}}}{\sqrt{2}}$ or $\hat{a}_1^{N+1} = \hat{a}_1^1 = \frac{\hat\sigma_{\textsc{z}}-\hat\sigma_{\textsc{x}}}{\sqrt{2}}$; for $N$ odd, party $\mathcal A^{N+1}$ measures either $\hat{a}_0^{N+1} = \hat\sigma_{\textsc{z}}$ or $\hat{a}_1^{N+1} = \hat\sigma_{\textsc{x}}$.
Defining $V = \prod_{i=1}^N v_i$, one finds
\ba
\moy{a^1_{x^1} \ldots a^{N+1}_{x^{N+1}}} = (-1)^{x^1 x^2 + x^2 x^3 + \ldots + x^N x^{N+1}} \frac{V}{2^{N/2}} \,. \qquad \label{eq:Nloc:corr}
\ea
The left-hand side of inequality~\eqref{ineq:Nloc_line} is then given by
\ba
&& \frac{2^{N/2} \, V}{4^{N-1}}  \sum_{x^2, \ldots, x^N} {\textstyle \frac{1}{q^2_{x^2}} \ldots \frac{1}{q^N_{x^N}}}   \, .
\ea
Noting that $\min_{q^2, \ldots, q^N} \big( \sum_{x^2, \ldots, x^N} {\textstyle \frac{1}{q^2_{x^2}} \ldots \frac{1}{q^N_{x^N}}} \big) = 4^{N-1}$, we find that the quantum correlations thus obtained violates the Bell inequality~\eqref{ineq:Nloc_line}---and hence are non-$\mathcal N$-local---for $V > 2^{-N/2}$. This proves a conjecture made in~\cite{bilocPRA}. Interestingly, note that although the global correlations become very weak for large $N$ and $V<1$, their nonlocality can nevertheless be revealed using the Bell inequality~\eqref{ineq:Nloc_line}.

\emph{Star network.---} To discuss star-shaped networks, we start from the bilocality network, \ie, a linear chain of 3 parties connected by 2 sources. For clarity, we re-label the parties by calling $\mathcal A^1$ and $\mathcal A^2$ the first and last parties in the chain, and $\mathcal B$ the middle one. The input and output of $\mathcal B$ are now denoted by $y$ and $b_y=\pm1$, respectively. Clearly, $\mathcal N$-local correlations satisfy the Bell inequality~\eqref{ineq:biloc}, with $a^2_{x^2}$ replaced by $b_y$ and $a^3_{x^3}$ replaced by $a^2_{x^2}$.

%

Similarly to our previous constructions, let us add a source $S_3$, connected now to party $\mathcal B$ and to a new party $\mathcal A^3$. The network $\mathcal N'$ thus obtained has a 3-branch star shape (see Fig.~\ref{fig:various_networks}c).
Applying Theorem~\ref{thm} to the Bell inequality of Eq.~\eqref{ineq:biloc} (and with the two subsets of party $\mathcal B$'s inputs $\mathcal Y_+ = \{0\}$ and $\mathcal Y_- = \{1\}$), we find that $\mathcal N'$-local correlations satisfy the inequality
\ba
&& \exists \ q, r \in [0,1] \ \mathrm{such \ that} \nonumber \\
&& \qquad {\textstyle \frac{1}{q} \frac{1}{r}} \, \moy{{\textstyle \frac{a^1_0+a^1_1}{2}} {\textstyle \frac{a^2_0 + a^2_1}{2}} {\textstyle \frac{a^3_0 + a^3_1}{2}} b_0} \nonumber \\
&& \qquad + \ {\textstyle \frac{1}{1-q} \frac{1}{1-r}} \,  \moy{{\textstyle \frac{a^1_0-a^1_1}{2}} {\textstyle \frac{a^2_0 - a^2_1}{2}} {\textstyle \frac{a^3_0 - a^3_1}{2}} b_1} \ \le \ 1 \, . \qquad
\ea

Iterating the above procedure, we obtain a star-shaped network $\mathcal N$ consisting of $N$ independent sources $\mathcal S_i$, each connected to one out of $N$ parties $\mathcal A^i$ and to a single central party $\mathcal B$, as depicted in Fig.~\ref{fig:various_networks}d. For such a network, we find that $\mathcal N$-local correlations satisfy the inequality
\ba
&& \exists \ q_1, \ldots, q_{N-1} \in [0,1] \ \mathrm{such \ that} \nonumber \\
&& \quad {\textstyle \frac{1}{q_1} \ldots \frac{1}{q_{N-1}}} \, \moy{{\textstyle \frac{a^1_0+a^1_1}{2} \ldots \frac{a^N_0 + a^N_1}{2}} b_0} \nonumber \\
&& \quad + \ {\textstyle \frac{1}{1-q_1} \ldots \frac{1}{1-q_{N-1}}} \, \moy{{\textstyle \frac{a^1_0-a^1_1}{2} \ldots \frac{a^N_0 - a^N_1}{2}} b_1} \ \le \ 1 \, . \qquad \label{ineq:star}
\ea
As shown in Appendix~E, by eliminating the quantifiers one can recover here the nonlinear Bell inequalities derived in~\cite{tavakoli}, which generalize the bilocal inequalities of~\cite{bilocPRA} to the star-shaped network considered here. For violations of these inequalities in quantum theory, we refer the reader to Ref.~\cite{tavakoli}.

\emph{Other topologies.---} To illustrate the versatility of our framework, we now discuss a network which is neither a linear chain nor star-shaped. Specifically, we start from a network $\mathcal N$ consisting of a single source $\mathcal S_1$ connected to 3 parties $\mathcal A^1$, $\mathcal A^2$ and $\mathcal A^3$. Here, $\mathcal N$-local (\ie, Bell-local) correlations satisfy the Mermin inequality~\cite{mermin}:
\ba
\moy{{\textstyle \frac{a^1_0 a^2_1+a^1_1 a^2_0}{2}} a^3_0} + \moy{{\textstyle \frac{a^1_0 a^2_0-a^1_1 a^2_1}{2}} a^3_1} \ \le \ 1 \, .
\ea
Adding a source $\mathcal S_2$, linked to party $\mathcal A^3$ and to a new party $\mathcal A^4$, we obtain a network $\mathcal N'$ sketched in Fig.~\ref{fig:various_networks}e. Using Theorem~\ref{thm} and choosing $\mathcal X^3_+ = \{0\}$ and $\mathcal X^3_- = \{1\}$ we find that $\mathcal N'$-local correlations have to obey the following Bell inequality:
\ba
&& \exists \ q \in [0,1] \ \mathrm{such \ that} \nonumber \\
&& \quad {\textstyle \frac{1}{q}} \, \moy{{\textstyle \frac{a^1_0 a^2_1+a^1_1 a^2_0}{2}} a^3_0 {\textstyle \frac{a_0^4 + a_1^4}{2}}} + {\textstyle \frac{1}{1-q}} \, \moy{{\textstyle \frac{a^1_0 a^2_0-a^1_1 a^2_1}{2}} a^3_1 {\textstyle \frac{a_0^4 - a_1^4}{2}}} \ \le \ 1 \, . \nonumber \\ \label{ineq:merminloc}
\ea


Let us again discuss quantum violations. Consider that $\mathcal S_1$ sends a noisy 3-qubit state: $\rho_1(v_1) = v_1 \ketbra{GHZ} + (1{-}v_1) \one/8$ with $\ket{GHZ} = \frac{1}{\sqrt{2}}(\ket{000}+\ket{111})$ and where $\one/8$ is the fully mixed state of three qubits, while $\mathcal S_2$ sends 2-qubit Werner state $\varrho_2(v_2)$ as defined previously.
Take for instance the following measurements: for parties $\mathcal A^1$, $\mathcal A^2$ and $\mathcal A^4$, $\hat{a}_0^1 = \hat{a}_0^2 = \hat{a}_0^4 = \frac{\hat\sigma_{\textsc{x}}+\hat\sigma_{\textsc{y}}}{\sqrt{2}}$ and $\hat{a}_1^1 = \hat{a}_1^2 = \hat{a}_1^4 = \frac{\hat\sigma_{\textsc{x}}-\hat\sigma_{\textsc{y}}}{\sqrt{2}}$; for $\mathcal A^3$, $\hat{a}_0^3 = \hat\sigma_{\textsc{x}} \otimes \hat\sigma_{\textsc{x}}$ and $\hat{a}_1^3 = \hat\sigma_{\textsc{y}} \otimes \hat\sigma_{\textsc{y}}$. We then get
\ba
\moy{{\textstyle \frac{a^1_0 a^2_1+a^1_1 a^2_0}{2}} a^3_0 {\textstyle \frac{a_0^4 + a_1^4}{2}}} = \moy{{\textstyle \frac{a^1_0 a^2_0-a^1_1 a^2_1}{2}} a^3_1 {\textstyle \frac{a_0^4 - a_1^4}{2}}} = \, \frac{V}{\sqrt{2}} \, , \quad \nonumber
\ea
where $V = v_1 v_2$. Minimizing again over $q$, we find that quantum correlations violate the Bell inequality~\eqref{ineq:merminloc} when $V > \frac{1}{2\sqrt{2}}$.

\emph{Discussion.---} We presented a simple and efficient method for generating Bell inequalities tailored for networks with independent sources. The relevance of our method was illustrated with various examples, featuring strong quantum violations. 

While we focused here on the case of binary inputs and outputs for each observer, our technique can also be used for deriving Bell inequalities with more inputs and outputs.
In fact, the only requirements that we explicitly made use of is that party $\mathcal A^M$ has binary outputs and the added observer $\mathcal A^{M+1}$ has binary inputs and outputs. In Appendix~F, we illustrate for instance a case with ternary inputs for parties $\mathcal A^1$ and $\mathcal A^2$ in the bilocality scenario, which also includes non-full-correlation terms. In principle our technique could also allow for any numbers of outputs for parties $\mathcal A^1, \ldots \mathcal A^{M-1}$; it would just become quite cumbersome to write without resorting to correlators.
Extending our method to the case where the party $\mathcal A^M$ has more outputs, and party $\mathcal A^{M+1}$ has an arbitrary number of inputs and outputs, is left for future work.

Finally, it would be interesting to derive Bell inequalities tailored for networks featuring loops. In the present work we could only discuss acyclic networks, as our method allows us to `add a leaf' to a graph, but not to create a cycle. Note however that, given a Bell inequality tailored for a network with a loop, our method can readily be applied in order to add a leaf; however, we are not aware of any nontrivial Bell inequality for networks containing a loop, despite intense research efforts in this direction~\cite{bilocPRA,fritz}.

\emph{Note added.---} While writing up this manuscript, we became aware of related work by Lee and Spekkens~\cite{lee} and Chaves~\cite{planet}, discussing polynomial Bell inequalities for networks.

\emph{Acknowledgements.---} We thank Ben Garfinkel for useful comments on the manuscript.
CB acknowledges financial support from the `Retour Post-Doctorants' program (ANR-13-PDOC-0026) of the French National Research Agency and from a Marie Curie International Incoming Fellowship (PIIF-GA-2013-623456) of the European Commission. NB from the Swiss National Science Foundation (grant PP00P2\_138917 and Starting Grant DIAQ), SEFRI (COST action MP1006) and the EU SIQS.


\appendix

\section{Appendix~A: \\ General statement and proof of our main theorem}
\renewcommand{\theequation}{A\arabic{equation}}
\setcounter{equation}{0}

Below we give the full version of our main theorem and its proof. Consider a network $\mathcal N$ with $M$ parties $\mathcal A^j$ and $N$ independent source $\mathcal S_i$, and a Bell inequality tailored for it, with inputs $x^j$ and binary outputs $a^j_{x^j} = \pm 1$. The new network $\mathcal N'$ is obtained by adding one source, $\mathcal S_{N+1}$, and one new party $\mathcal A^{M+1}$. $\mathcal S_{N+1}$ is linked to just one party of $\mathcal N$, say $\mathcal A_M$, and to the new party, $\mathcal A^{M+1}$. 
The inputs and outputs of $\mathcal A^{M+1}$ are both taken to be binary, labeled by $x^{M+1} = 0,1$ and $a^{M+1}_{x^{M+1}} = \pm 1$ respectively. Here in order to also consider non-full-correlation terms, we introduce, for each party, a `trivial' input $x^j = \emptyset$ with a corresponding trivial output $a^j_{\emptyset} = 1$; this will allow us to write for instance $\moy{a^1_{x^1}a^2_{\emptyset}\ldots a^M_{\emptyset}} = \moy{a^1_{x^1}}$.

Given a set of constraints capturing $\mathcal N$-local correlations, we obtain novel constraints for $\mathcal N'$-local correlations as follows.

\begin{thm} \label{thm2}
Suppose that the correlators $\moy{a^1_{x^1}\ldots a^M_{x^M}}$ in any $\mathcal N$-local model satisfy a set of Bell inequalities of the form
\ba
\Big\{ 
\sum_{x^1\!, \ldots, x^M} \beta^{(k)}_{x^1\!, \ldots, x^M} \ \moy{a^1_{x^1}\ldots a^M_{x^M}}  \ \le \ L^{(k)} \Big\}_k \, , \label{ineq:N_loc_ap}
\ea
for some real coefficients $\beta^{(k)}_{x^1\!, \ldots, x^M}$, some `$\mathcal N$-local bounds' $L^{(k)}$, and different values of~$k$.
Then $\mathcal N'$-local correlations (for the network $\mathcal N'$ obtained from $\mathcal N$ as described above) satisfy the following constraint: either there exists $q \in \ ]0,1[$ such that for all $k$ and for any partition of the set of party $\mathcal A^M$'s nontrivial inputs into two disjoint subsets $\mathcal X^M_+$ and $\mathcal X^M_-$,
\ba
\frac{1}{q} \, \Sigma^{(k)}_{\mathcal X_+} \, + \, \frac{1}{1-q} \, \Sigma^{(k)}_{\mathcal X_-} \, + \, \Sigma^{(k)}_{\emptyset} \ \le \ L^{(k)} \label{ineq:Nplus1_loc_ap}
\ea
for
\ba
&&\Sigma^{(k)}_{\mathcal X_\pm} =  \sum_{\stackrel{x^1\!, \ldots, x^{M-1}\!,}{x^M \in \mathcal X^M_\pm}} \beta_{x^1\!, \ldots, x^M}^{(k)} \ \moy{a^1_{x^1}\ldots a^M_{x^M} {\textstyle \frac{a^{M+1}_0 \pm a^{M+1}_1}{2}}}, \nonumber \\[-7mm] \label{def:Sigma_pm_k} \\[3mm]
&& \Sigma^{(k)}_{\emptyset} =  \sum_{x^1\!, \ldots, x^{M-1}} \! \beta^{(k)}_{x^1\!, \ldots, x^{M-1}\!,\emptyset} \, \moy{a^1_{x^1}\ldots a^{M-1}_{x^{M-1}} } \, ; \quad \ \label{def:Sigma_emptyset_k}
\ea
or $\Sigma^{(k)}_{\mathcal X_-} = 0$ and $\Sigma^{(k)}_{\mathcal X_+} + \Sigma^{(k)}_{\emptyset} \le L^{(k)}$ for all $k$ and all $\mathcal X_\pm^M$; or $\Sigma^{(k)}_{\mathcal X_+} = 0$ and $\Sigma^{(k)}_{\mathcal X_-} + \Sigma^{(k)}_{\emptyset} \le L^{(k)}$ for all $k$ and all $\mathcal X_\pm^M$.
\end{thm}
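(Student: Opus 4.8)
The plan is to turn the statement into a pointwise estimate over the added source and then feed it back into the hypothesis, used in a slightly strengthened form. First I would take the $\mathcal{N}'$-local model to be deterministic and exploit that the new variable $\lambda_{N+1}$ is independent of all the others, so the weight factorizes as $\rho_{\mathrm{rest}}(\vec\lambda)\,\rho_{N+1}(\lambda_{N+1})$. Since $\mathcal{A}^{M+1}$ is connected only to $\mathcal{S}_{N+1}$, its outputs $a^{M+1}_0(\lambda_{N+1}),a^{M+1}_1(\lambda_{N+1})\in\{\pm1\}$ split $\Lambda_{N+1}$ into $\Lambda_+$ (where $a^{M+1}_0=a^{M+1}_1$) and $\Lambda_-$ (where $a^{M+1}_0=-a^{M+1}_1$); on $\Lambda_\pm$ the factor $\frac{a^{M+1}_0\pm a^{M+1}_1}{2}$ reduces to the common sign $\sigma:=a^{M+1}_0$ while the opposite combination vanishes. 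I would set $q:=\int_{\Lambda_+}\rho_{N+1}$, the degenerate clauses $\Sigma^{(k)}_{\mathcal X_\mp}=0$ corresponding to $\Lambda_\mp$ being $\rho_{N+1}$-null, i.e.\ $q\to1,0$.

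Second, I would rewrite the two halves. Fixing $\lambda_{N+1}$ turns $\mathcal{A}^M$'s response into a legitimate response for the base network, so the conditional correlators $\moy{a^1_{x^1}\ldots a^M_{x^M}}_{\lambda_{N+1}}$ form a genuine $\mathcal{N}$-local correlation. Writing $\Phi^{(k)}_{x^M}(\lambda_{N+1})$ for the slice of the $k$-th form in \eqref{ineq:N_loc_ap} carrying input $x^M$ of $\mathcal{A}^M$, and $\Sigma^{(k)}_\emptyset$ for the $\mathcal A^M$-independent remainder \eqref{def:Sigma_emptyset_k}, the definition \eqref{def:Sigma_pm_k} gives, after absorbing $\sigma$ into $\mathcal A^M$ and conditioning, $\frac1q\Sigma^{(k)}_{\mathcal X_+}=\frac1q\int_{\Lambda_+}\rho_{N+1}\,\sigma\sum_{x^M\in\mathcal X^M_+}\Phi^{(k)}_{x^M}$ and likewise for $\Sigma^{(k)}_{\mathcal X_-}$ over $\Lambda_-$; the normalisation by $q$ (resp.\ $1-q$) is exactly what turns these into $\Lambda_\pm$-conditional averages.

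Third, I would bound each half uniformly in $\lambda_{N+1}$ by tracing out $\mathcal A^M$: since $|\langle a^M_{x^M}\rangle|\le1$, each $|\Phi^{(k)}_{x^M}(\lambda_{N+1})|$ is at most a $\lambda_{N+1}$-independent constant $P^{(k)}_{x^M}=\int\!\prod_{i\le N}\rho_i\,|C^{(k)}_{x^M}|$, where $C^{(k)}_{x^M}$ collects the coefficients $\beta^{(k)}$ and the responses of $\mathcal A^1,\ldots,\mathcal A^{M-1}$. Because this bound no longer depends on $\lambda_{N+1}$, the conditional averages are controlled by the same constants for \emph{every} $k$ and every partition, so that $\frac1q\Sigma^{(k)}_{\mathcal X_+}+\frac1{1-q}\Sigma^{(k)}_{\mathcal X_-}+\Sigma^{(k)}_\emptyset\le\sum_{x^M\neq\emptyset}P^{(k)}_{x^M}+\Sigma^{(k)}_\emptyset$ with the \emph{single} choice $q=\int_{\Lambda_+}\rho_{N+1}$.

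It then remains to prove $\sum_{x^M\neq\emptyset}P^{(k)}_{x^M}+\Sigma^{(k)}_\emptyset\le L^{(k)}$. The key is that $\sum_{x^M\neq\emptyset}|C^{(k)}_{x^M}|+C^{(k)}_\emptyset$ is the value of the base form at $(\lambda_1,\ldots,\lambda_N)$ when $\mathcal A^M$ picks $a^M_{x^M}=\mathrm{sign}\,C^{(k)}_{x^M}$, i.e.\ at a \emph{local deterministic} assignment, and integrating against $\prod_i\rho_i$ stays inside the local polytope. Here I would invoke the enabling fact that any linear inequality \eqref{ineq:N_loc_ap} valid on all $\mathcal N$-local correlations is automatically valid on the whole local polytope, since every local-deterministic vertex is itself $\mathcal N$-local and the form is affine; hence the integral is $\le L^{(k)}$. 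The main obstacle is exactly this reconciliation: a naive convexity/mixing of the $\Lambda_+$ and $\Lambda_-$ contributions double-counts the budget and loses a factor of two, and what restores the factor-one bound is the combination of (i) the $\lambda_{N+1}$-independence of the traced constants, which lets one $q$ normalise both halves simultaneously, and (ii) the upgrade from $\mathcal N$-locality to full locality, which is what licenses granting $\mathcal A^M$ access to all sources. The delicate remaining point, needed for the iterative application, is to extend this to base inequalities that already carry quantifiers: there the nonlinear bounds are no longer valid on the full local polytope, so the pointwise step must instead track a $\lambda_{N+1}$-dependent choice of the inherited parameters.
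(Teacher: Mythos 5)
Your proof is correct for the theorem as stated, and it reaches the conclusion by a genuinely different key step than the paper's. The setup coincides: both arguments take a deterministic $\mathcal N'$-local model, split the new source's variable $\mu$ according to whether $a^{M+1}_0(\mu)=\pm a^{M+1}_1(\mu)$, set $q$ equal to the weight of the `$+$' set, and rewrite $\frac1q\Sigma^{(k)}_{\mathcal X_+}$ and $\frac1{1-q}\Sigma^{(k)}_{\mathcal X_-}$ as conditional averages over the two sets. From there the paper stays inside the $\mathcal N$-local set: for every fixed pair $(\mu_+,\mu_-)$ it builds a hybrid deterministic $\mathcal N$-local model in which $\mathcal A^M$ answers inputs in $\mathcal X^M_\pm$ with $\tilde a^M_{x^M,\mu_\pm}=a^M_{x^M}(\cdot,\mu_\pm)\,a^{M+1}_0(\mu_\pm)$, applies the base inequality~\eqref{ineq:N_loc_ap} to that genuine $\mathcal N$-local model, and then averages over the product of the two conditional measures; the weights $1/q$ and $1/(1-q)$ arise as the normalizations of those averages, and the degenerate cases $q\in\{0,1\}$ need an extra trick (averaging with sign-flipped copies of the base inequality). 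You instead bound each conditional average uniformly in $\mu$ by the constants $P^{(k)}_{x^M}$, obtained by granting $\mathcal A^M$ the optimal sign as a function of \emph{all} of $(\lambda_1,\ldots,\lambda_N)$, and you close the argument with the observation that a linear inequality valid on all $\mathcal N$-local correlations is automatically valid on the whole local polytope (deterministic vertices are $\mathcal N$-local and the form is linear), so that $\sum_{x^M\neq\emptyset}P^{(k)}_{x^M}+\Sigma^{(k)}_{\emptyset}\le L^{(k)}$. This chain of bounds is sound, is arguably more elementary, handles $q\in\{0,1\}$ with no extra work, and makes transparent why a single $q$ serves all $k$ and all partitions; it also isolates a neat byproduct, namely that the linear case of the theorem follows from validity of the base inequality on the full local polytope alone. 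What it gives up is exactly what you flag in your last sentence: your route uses linearity twice (the uniform bound and the convexity upgrade), so it does not cover base constraints that themselves carry quantifiers, whereas the paper's hybrid-model step invokes the base constraint only on bona fide $\mathcal N$-local correlators and therefore is the version that supports the iterative applications to chains and stars --- though even there, uniformity of the inherited witness over the hybrid family is needed and is left implicit by the paper, which is precisely the issue your closing remark identifies.
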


Theorem~\ref{thm2} is a generalization of Theorem~\ref{thm} as given in the main text. It allows for terms that are not `full correlators', and tells us that if one wants to apply Theorem~\ref{thm} to different initial Bell inequalities, the same value of $q$ can be used for all those inequalities.

\begin{proof}[Proof of Theorem~\ref{thm2}]

Consider an $\mathcal N'$-local model with independent random variables $\lambda_i$ ($1 \leq i \leq N$) attached to the $N$ sources $\mathcal S_i$ (as in the general description of a $\mathcal N$-local model in the main text) and an independent random variable $\mu \in \rotatebox[origin=c]{180}{W}$, distributed according to $\rho_{\rotatebox[origin=c]{180}{\tiny W}}(\mu)$, attached to the source $\mathcal S_{N+1}$. (We call it $\mu$ rather than $\lambda_{N+1}$ to ease notation, and to highlight the particular role it plays in our construction.)
We assume, without loss of generality, that the model is deterministic (as any randomness used locally could be included in the variables $\lambda_i$), with binary response functions $a^j_{x^j}(\vec\lambda_{\mathcal A^j}) = \pm 1$, $a^M_{x^M}(\vec\lambda_{\mathcal A^M},\mu) = \pm 1$ and $a^{M+1}_{x^{M+1}}(\mu) = \pm 1$ for parties $\mathcal A^j$ with $1 \leq j \leq M{-}1$, $\mathcal A^M$ and $\mathcal A^{M+1}$, respectively.

Let us define
\ba
& \rotatebox[origin=c]{180}{W}_\pm = \big\{ \mu \in \rotatebox[origin=c]{180}{W} \, \big| \, a^{M+1}_0(\mu) = \pm a^{M+1}_1(\mu) \big\} \, , \quad \\[1mm]
& q_\pm = \int_{\rotatebox[origin=c]{180}{\tiny W}_\pm}  \text{d}\mu \, \rho_{\rotatebox[origin=c]{180}{\tiny W}}(\mu) \, , \quad \rho_{\rotatebox[origin=c]{180}{\tiny W}_\pm}(\mu) = \rho_{\rotatebox[origin=c]{180}{\tiny W}}(\mu) / q_\pm \, ,
\ea
such that $q_+ + q_- = 1$, and so that $\rho_{\rotatebox[origin=c]{180}{\tiny W}_\pm}$ define normalized measures on $\rotatebox[origin=c]{180}{W}_\pm$, resp. (if $q_\pm = 0$, we let $\rho_{\rotatebox[origin=c]{180}{\tiny W}_\pm}(\mu)$ be any normalized measure on $\rotatebox[origin=c]{180}{W}_\pm$).

Let us then calculate, for this $\mathcal N'$-local model:
\ba
&& \moy{a^1_{x^1}\ldots a^M_{x^M} {\textstyle \frac{a^{M+1}_0 \pm a^{M+1}_1}{2}}} \nonumber \\[1mm]
&& = \int_{\Lambda_1} \!\!\!\!\! \text{d}\lambda_1 \, \rho_1(\lambda_1) \ldots \!\! \int_{\Lambda_N} \!\!\!\!\!\! \text{d}\lambda_N \, \rho_N(\lambda_N) \int_{\rotatebox[origin=c]{180}{\tiny W}} \!\! \text{d}\mu \, \rho_{\rotatebox[origin=c]{180}{\tiny W}}(\mu) \nonumber \\[-1mm]
&& \ \ a^1_{x^1}\!(\vec\lambda_{\mathcal A^1}) \ldots a^{M\textrm{-}1}_{x^{M\textrm{-}1}}\!(\vec\lambda_{\mathcal A^{M\textrm{-}1}}) a^M_{x^M}(\vec\lambda_{\mathcal A^M},\mu) {\textstyle \frac{a^{\!M\!+\!1\!}_0(\mu) \pm a^{\!M\!+\!1\!}_1(\mu)}{2}} \nonumber \\[1mm]
&& = q_\pm \int_{\Lambda_1} \!\!\!\!\! \text{d}\lambda_1 \, \rho_1(\lambda_1) \ldots \!\! \int_{\Lambda_N} \!\!\!\!\!\! \text{d}\lambda_N \, \rho_N(\lambda_N) \int_{\rotatebox[origin=c]{180}{\tiny W}_\pm} \!\! \text{d}\mu_\pm \, \rho_{\rotatebox[origin=c]{180}{\tiny W}_\pm}(\mu_\pm) \nonumber \\[-1mm]
&& \qquad a^1_{x^1}(\vec\lambda_{\mathcal A^1}) \ldots a^{M-1}_{x^{M-1}}(\vec\lambda_{\mathcal A^{M-1}}) \tilde a^M_{x^M\!,\mu_\pm}(\vec\lambda_{\mathcal A^M}) \nonumber \\[1mm]
&& = q_\pm \int_{\rotatebox[origin=c]{180}{\tiny W}_\pm} \!\! \text{d}\mu_\pm \, \rho_{\rotatebox[origin=c]{180}{\tiny W}_\pm}(\mu_\pm) \ \moy{a^1_{x^1}\ldots a^{M-1}_{x^{M-1}} \tilde a^M_{x^M\!,\mu_\pm}}, \label{eq:Nplus1_to_N}
\ea
with the (deterministic) response function $\tilde a^M_{x^M\!,\mu_\pm}(\vec\lambda_{\mathcal A^M}) = a^M_{x^M}(\vec\lambda_{\mathcal A^M},\mu_\pm) a^{M+1}_0(\mu_\pm)$, where we relabeled $\mu \to \mu_\pm$ (which is formally understood as an additional input for party $\mathcal A^M$) depending on whether $\mu \in \rotatebox[origin=c]{180}{W}_\pm$, and where the correlator $\moy{a^1_{x^1}\ldots a^{M-1}_{x^{M-1}} \tilde a^M_{x^M\!,\mu_\pm}}$ is $\mathcal N$-local.

Suppose now that the correlators $\moy{a^1_{x^1}\ldots a^M_{x^M}}$ of any $\mathcal N$-local model satisfy the set of Bell inequalities~\eqref{ineq:N_loc_ap} (which may already involve quantifiers of the form `$\exists \, q  \ldots$'). Let us divide the set of party $\mathcal A^M$'s nontrivial inputs into two disjoint subsets $\mathcal X^M_+$ and $\mathcal X^M_-$. By relabeling party $\mathcal A^M$'s inputs as in $x^M \to (x^M\!,\mu_+)$ if $x^M \in \mathcal X^M_+$ and $x^M \to (x^M\!,\mu_-)$ if $x^M \in \mathcal X^M_-$, and by writing now the corresponding outputs as $\tilde a^M_{x^M\!,\mu_\pm}$, it clearly follows that for all $\mu_+$ and $\mu_-$, $\mathcal N$-local correlators $\moy{a^1_{x^1}\ldots a^{M-1}_{x^{M-1}} \tilde a^M_{x^M\!,\mu_\pm}}$ satisfy, for all $k$,
\ba
 S^{(k)}_{\mathcal X_+,\mu_+} +  S^{(k)}_{\mathcal X_-,\mu_-} + \Sigma^{(k)}_{\emptyset} \ \le \ L^{(k)},  \label{ineq:BI_mu}
\ea
where 
\ba
&& S^{(k)}_{\mathcal X_\pm,\mu_\pm} = \!\!\! \sum_{\stackrel{x^1\!, \ldots, x^{M-1}\!,}{x^M \in \mathcal X^M_\pm}} \! \beta^{(k)}_{x^1\!, \ldots, x^M} \, \moy{a^1_{x^1}\ldots a^{M-1}_{x^{M-1}} \tilde a^M_{x^M\!,\mu_\pm}} \, , \qquad \label{ineq:BI_mu_S}
\ea
and with $\Sigma^{(k)}_{\emptyset}$ as defined in Eq.~\eqref{def:Sigma_emptyset_k}.
By Eqs.~\eqref{def:Sigma_pm_k} and~\eqref{eq:Nplus1_to_N}, $S^{(k)}_{\mathcal X_\pm,\mu_\pm}$ are such that
\ba
\Sigma^{(k)}_{\mathcal X_\pm} = q_\pm \int_{\rotatebox[origin=c]{180}{\tiny W}_\pm} \!\! \text{d}\mu_\pm \, \rho_{\rotatebox[origin=c]{180}{\tiny W}_\pm}(\mu_\pm) \, S^{(k)}_{\mathcal X_\pm,\mu_\pm} \, . \label{eq:Sigma_pm_k}
\ea

\medskip

Consider first the case where $q_\pm \in \ ]0, 1[$. Averaging Eq.~\eqref{ineq:BI_mu} over $\mu_+ \in \rotatebox[origin=c]{180}{W}_+$ and $\mu_- \in \rotatebox[origin=c]{180}{W}_-$, and using Eq.~\eqref{eq:Sigma_pm_k} (after divinding it by $q_\pm$), we recover Eq.~\eqref{ineq:Nplus1_loc_ap} with $q=q_+$.

\medskip

In the case where $q_+ = 1$ and $q_- = 0$, first note that the Bell inequalities~\eqref{ineq:N_loc_ap} for $\mathcal N$-local correlations also imply the Bell inequalities
\ba
&& \sum_{\stackrel{x^1\!, \ldots, x^{M-1}\!,}{x^M \in \mathcal X^M_+}} \beta_{x^1\!, \ldots, x^M}^{(k)} \ \moy{a^1_{x^1}\ldots a^M_{x^M}} \nonumber \\
&& \quad - \sum_{\stackrel{x^1\!, \ldots, x^{M-1}\!,}{x^M \in \mathcal X^M_-}} \beta_{x^1\!, \ldots, x^M}^{(k)} \ \moy{a^1_{x^1}\ldots a^M_{x^M}} \nonumber \\
&& \quad + \sum_{x^1\!, \ldots, x^{M-1}} \beta_{x^1\!, \ldots, x^{M-1}\!,\emptyset}^{(k)} \ \moy{a^1_{x^1}\ldots a^{M-1}_{x^{M-1}}} \ \le \ L^{(k)} \, , \qquad \quad
\ea
where we simply changed the sign in front of the coefficients $\beta_{x^1\!, \ldots, x^M}$ with $x^M \in \mathcal X^M_-$ (this can indeed be seen by letting party $\mathcal A^M$ locally flip their outputs when their inputs are in $\mathcal X^M_-$.) Averaging the two families of Bell inequalities, replacing $a^M_{x^M}$ by $\tilde a^M_{x^M\!,\mu_+}$ (as in~\eqref{ineq:BI_mu}--\eqref{ineq:BI_mu_S} above), averaging over $\mu_+ \in \rotatebox[origin=c]{180}{W}_+$ and using Eq.~\eqref{eq:Sigma_pm_k}, we find that $\Sigma^{(k)}_{\mathcal X_-} = 0$ and $\Sigma^{(k)}_{\mathcal X_+} + \Sigma^{(k)}_{\emptyset} \le L^{(k)}$.

Similarly, in the case where $q_+ = 0$ and $q_- = 1$, we find that $\Sigma^{(k)}_{\mathcal X_+} = 0$ and $\Sigma^{(k)}_{\mathcal X_-} + \Sigma^{(k)}_{\emptyset} \le L^{(k)}$, which concludes the proof of Theorem~\ref{thm2}.

\end{proof}

\section{Appendix~B: \\ Obtaining CHSH from the proof of Theorem~\ref{thm2}} 
\renewcommand{\theequation}{B\arabic{equation}}
\setcounter{equation}{0}

Interestingly, our proof of Theorem~\ref{thm2} above provides a way to derive the well-known CHSH Bell inequality~\cite{chsh}, in a similar spirit to our iterative construction of $\mathcal N$-local inequalities.

Consider a trivial network $\mathcal N$ consisting of only one party $\mathcal A^1$.
By adding a source and a new party $\mathcal A^2$ to $\mathcal N$ as previously, we obtain a network $\mathcal N'$ that corresponds to the typical scenario of a Bell experiment. According to Eq.~\eqref{eq:Nplus1_to_N}, the correlators obtained in a $\mathcal N'$-local---\ie, simply `Bell-local'~\cite{bell64,review}---model satisfy
\ba
\moy{a_{x^1}^1 {\textstyle \frac{a_0^2 \pm a_1^2}{2}}} &=& q_\pm \int_{\rotatebox[origin=c]{180}{\tiny W}_\pm} \!\! \text{d}\mu_\pm \, \rho_{\rotatebox[origin=c]{180}{\tiny W}_\pm}(\mu_\pm) \ \moy{\tilde a^1_{x^1\!,\mu_\pm}}.
\ea

The integral above corresponds to the average of the quantities $\moy{\tilde a^1_{x^1\!,\mu_\pm}} \in [-1,1]$, and is therefore itself in the interval $[-1,1]$. The combination $\moy{a_0^1 {\textstyle \frac{a_0^2 + a_1^2}{2}}} + \moy{a_1^1 {\textstyle \frac{a_0^2 - a_1^2}{2}}}$ is thus a convex sum of quantities in $[-1,1]$, with nonnegative weights $q_+$ and $q_- = 1 - q_+$. This implies that
\ba
-1 \ \le \ \moy{a_0^1 {\textstyle \frac{a_0^2 + a_1^2}{2}}} + \moy{a_1^1 {\textstyle \frac{a_0^2 - a_1^2}{2}}} \ \le \ 1,
\ea
which is simply the CHSH Bell inequality.

\section{Appendix~C: \\ Recovering the bilocal inequality of Ref.~\cite{bilocPRA}}
\renewcommand{\theequation}{C\arabic{equation}}
\setcounter{equation}{0}

Let us define, as in~\cite{bilocPRA},
\ba
I = \moy{{\textstyle \frac{a^1_0+a^1_1}{2} a_0^2 \frac{a^3_0 + a^3_1}{2}}} \, , \quad
J = \moy{{\textstyle \frac{a^1_0-a^1_1}{2} a_1^2 \frac{a^3_0 - a^3_1}{2}}} \, . \nonumber 
\ea

Note that in our derivation of~\eqref{ineq:biloc}, we could have applied Theorem~\ref{thm2} to the set of all four equivalent versions of CHSH of the form
\ba
\sigma \moy{{\textstyle \frac{a^1_0+a^1_1}{2}} a^2_0} + \tau \moy{{\textstyle \frac{a^1_0-a^1_1}{2}} a^2_1} \ \le \ 1 \, , \label{ineq:4CHSH}
\ea
for any combination of $\sigma, \tau = \pm 1$. This would have led to four different versions of the inequality in Eq.~\eqref{ineq:biloc} (all for the same $q$) with $\moy{{\textstyle \frac{a^1_0+a^1_1}{2} a_0^2 \frac{a^3_0 + a^3_1}{2}}} (=I)$ replaced by $\sigma I$ and $\moy{{\textstyle \frac{a^1_0-a^1_1}{2} a_0^2 \frac{a^3_0 - a^3_1}{2}}} (=J)$ replaced by $\tau J$. Combining the four Bell inequalities thus obtained, we find that bilocal correlations satisfy
\ba
&& \exists \ q \in [0,1] \ \mathrm{such \ that} \  {\textstyle \frac{1}{q}} |I| + {\textstyle \frac{1}{1-q}} |J| \ \le \ 1 \, . \qquad \label{ineq:biloc_bis}
\ea

One easily finds that the minimal value of ${\textstyle \frac{1}{q}} |I| + {\textstyle \frac{1}{1-q}} |J|$ is $(\!\sqrt{|I|}+\!\sqrt{|J|}\,)^2$, obtained for $q = \frac{\sqrt{|I|}}{\sqrt{|I|}+\!\sqrt{|J|}}$ (or for any $q$ if $I=J=0$). Eq.~\eqref{ineq:biloc_bis} is thus equivalent to
\ba
\sqrt{|I|}\,+\!\sqrt{|J|} \, \le \, 1 \, , \label{ineq:biloc_ter}
\ea
which is indeed the bilocal inequality derived in~\cite{bilocPRA}.

\section{Appendix~D: \\ Bell inequality for trilocality in the nonlinear form}
\renewcommand{\theequation}{D\arabic{equation}}
\setcounter{equation}{0}

In the main text, we derived a Bell inequality for the scenario of `trilocality', that is, a chain network with 3 sources and 4 observers. 
This inequality, given in Eq.~\eqref{ineq:triloc}, can be rewritten without quantifiers. We define here
\ba
I = \moy{\textstyle{\frac{a^1_0+a^1_1}{2} a^2_0 a^3_0 \frac{a^4_0+a^4_1}{2}}}, && \quad
J = \moy{\textstyle{\frac{a^1_0+a^1_1}{2} a^2_0 a^3_1 \frac{a^4_0-a^4_1}{2}}}, \nonumber \\
K = \moy{\textstyle{\frac{a^1_0-a^1_1}{2} a^2_1 a^3_0 \frac{a^4_0+a^4_1}{2}}}, && \quad
L = \moy{\textstyle{\frac{a^1_0-a^1_1}{2} a^2_1 a^3_1 \frac{a^4_0-a^4_1}{2}}}, \nonumber
\ea
and using $q_- = 1 - q$, $r_- = 1 - r$, we multiply Eq.~\eqref{ineq:triloc} by
$2 q q_- r r_-$:
\ba
f=2 q q_-r r_- - q_- r_- I - q_-r J - q r_- K + q r L \ge 0 \, , \nonumber \\ \label{ineq:triloc:poly}
\ea
at the price of losing discriminating power when some of the $I,J,K,L= 0$.

The set of correlations is bounded by $f=0$, when $q,r$ are such that $f$ is maximized; the polynomial equation for this
boundary is thus given by the system $\{ f=0, \frac{\partial f}{\partial q} = 0, \frac{\partial f}{\partial r} = 0 \}$. 
The variables $q, r$ can be removed using an elimination ideal~\cite{elimination}, giving a polynomial $W$ of degree 9 with 286 terms. 
Observing that Eq.~\eqref{ineq:triloc:poly} is symmetric under the group $G$ of order 8 generated by $\{I \leftrightarrow -L \}$ and
$\{I \rightarrow K \rightarrow -L \rightarrow J \rightarrow I\}$, the polynomial $W$ can be decomposed using the
primary invariants $f_1 = I + J + K - L$, $f_2 = J K - I L$, $f_3 = I^2 + J^2 + K^2 + L^2$, $f_4 =-I^2 J K + I J^2 L + I K^2 L - J K L^2$ 
and the secondary invariants $g_1 = 1$, $g_2 = I^3 + J^3 + K^3 - L^3$ of the invariant ring $\mathbb{R}[I,J,K,L]^G$~\cite{invariant}:
\ba
W = W_1 \, g_1 + W_2 \, g_2 \ge 0,
\ea
with factors $W_1$, $W_2$:
\ba 
  \hspace{-1mm} W_1 = &  & f_1^8 + f_2 f_1^7 - 18 f_1^7 - 31 f_2 f_1^6 - 6 f_3 f_1^6 + 20 f_1^6 - 11 f_2^2 f_1^5 \nonumber\\
  &  & + 174 f_2 f_1^5 - 6 f_2 f_3 f_1^5 + 74 f_3 f_1^5 + 2 f_4 f_1^5 - 24 f_1^5  \nonumber\\
  &  & + 183 f_2^2 f_1^4 + 11 f_3^2 f_1^4 - 148 f_2 f_1^4 + 130 f_2 f_3 f_1^4 - 52 f_3 f_1^4  \nonumber\\
  &  & - 30 f_4 f_1^4 + 8 f_1^4 + 40 f_2^3 f_1^3 - 496 f_2^2 f_1^3 + 11 f_2 f_3^2 f_1^3 \nonumber\\
  &  & - 60 f_3^2 f_1^3 + 88 f_2 f_1^3 + 45 f_2^2 f_3 f_1^3 - 494 f_2 f_3 f_1^3 + 72 f_3 f_1^3 \nonumber\\
  &  & - 14 f_2 f_4 f_1^3 - 10 f_3 f_4 f_1^3 - 180 f_4 f_1^3 - 312 f_2^3 f_1^2  - 6 f_3^3 f_1^2 \nonumber\\
  &  & + 288 f_2^2 f_1^2 - 117 f_2 f_3^2 f_1^2 - 24 f_3^2 f_1^2 - 24 f_2 f_1^2 \nonumber\\
  &  & - 510 f_2^2 f_3 f_1^2  + 300 f_2 f_3 f_1^2 - 24 f_3 f_1^2 - 108 f_2 f_4 f_1^2 \nonumber\\
  &  & + 90 f_3 f_4 f_1^2 + 120 f_4 f_1^2 - 48 f_2^4 f_1 + 384 f_2^3 f_1 - 6 f_2 f_3^3 f_1  \nonumber\\
  &  & - 42 f_2^2 f_3^2 f_1 + 120 f_2 f_3^2 f_1 - 84 f_2^3 f_3 f_1 - 144 f_2^2 f_1  \nonumber \\
  &  & + 828 f_2^2 f_3 f_1 - 120 f_2 f_3 f_1 + 24 f_2^2 f_4 f_1 + 12 f_3^2 f_4 f_1 \nonumber\\
  &  & + 888 f_2 f_4 f_1 + 36 f_2 f_3 f_4 f_1 + 336 f_3 f_4 f_1 - 144 f_4 f_1  \nonumber\\
  &  & + 48 f_2^4 - 288 f_2^3 + 6 f_2 f_3^3 + 48 f_2^2 + 135 f_2^2 f_3^2 + 24 f_2 f_3^2  \nonumber\\
  &  & - 324 f_4^2 + 432 f_2^3 f_3 - 120 f_2^2 f_3 + 24 f_2 f_3 + 768 f_2^2 f_4 \nonumber\\
  &  &  + 12 f_3^2 f_4 - 336 f_2 f_4 + 84 f_2 f_3 f_4 + 48 f_3 f_4 + 48 f_4 \nonumber
\ea
\ba
  \hspace{-1mm} W_2 & = & 2 f_1^5+2 f_2 f_1^4-36 f_1^4-56 f_2 f_1^3-6 f_3 f_1^3+40 f_1^3  \nonumber \\
&& -16 f_2^2 f_1^2+240 f_2 f_1^2-6 f_2 f_3 f_1^2+40 f_3 f_1^2+4 f_4 f_1^2 \nonumber \\
&& -48 f_1^2+192 f_2^2 f_1+4 f_3^2 f_1-176 f_2 f_1+68 f_2 f_3 f_1 \nonumber \\
&& +16 f_3 f_1-72 f_4 f_1 +16 f_1+32 f_2^3-320 f_2^2+4 f_2 f_3^2 \nonumber \\
&& +32 f_2+24 f_2^2 f_3-40 f_2 f_3-16 f_2 f_4-8 f_3 f_4-144 f_4. \nonumber
\ea

Such a nonlinear form is however clearly too cumbersome for any practical use. Nevertheless, the quantum correlations of Eq.~\eqref{eq:Nloc:corr}, for which $I=J=K=-L=V/2^{3/2}$,
lead to the inequality $W = 3 V^4 (1 - 2^{3/2} \, V) \ge 0$, which is violated for $V > 2^{-3/2}$.

\section{Appendix~E: Recovering the Bell inequalities \\ of Ref.~\cite{tavakoli} for star networks}
\renewcommand{\theequation}{E\arabic{equation}}
\setcounter{equation}{0}

Let us define here, as in~\cite{tavakoli},
\ba
I &\!=& \moy{{\textstyle \frac{a^1_0+a^1_1}{2} \ldots \frac{a^N_0 + a^N_1}{2}} b_0} = \frac{1}{2^N} \! \sum_{x^1,\ldots,x^N} \! \moy{a^1_{x^1} \ldots a^N_{x^N} b_0} , \nonumber \\
J &\!=& \! \moy{{\textstyle \frac{a^1_0 \textrm{-} a^1_1}{2} \!\ldots\! \frac{a^N_0 \!\textrm{-} a^N_1\!}{2}} b_1} \!=\! \frac{1}{2^N} \!\!\! \sum_{x^1\!,\ldots,x^N} \!\!\! (\textrm{-}1)^{\sum \! x^j} \! \moy{a^1_{x^1} \!\ldots\! a^N_{x^N} b_1} . \nonumber 
\ea

By initially starting from all four versions of CHSH of the form~\eqref{ineq:4CHSH}, one can actually obtain a stronger constraint than Eq.~\eqref{ineq:star}, namely
\ba
&& \exists \ q_1, \ldots, q_{N-1} \in [0,1] \ \mathrm{such \ that} \nonumber \\
&& \qquad {\textstyle \frac{1}{q_1} \ldots \frac{1}{q_{N-1}}} \, | I | + {\textstyle \frac{1}{1-q_1} \ldots \frac{1}{1-q_{N-1}}} \, | J | \ \le \ 1 \, . \qquad \label{ineq:N_loc_bis}
\ea
One easily finds that the minimal value of the left hand side of the inequality above, for $q_1, \ldots, q_{N-1} \in [0,1]$, is $(\!\sqrt[N]{|I|}+\!\sqrt[N]{|J|}\,)^N$, obtained for all $q_j = \frac{\sqrt[N]{|I|}}{\sqrt[N]{|I|}+\!\sqrt[N]{|J|}}$ (or for any $q_j$ if $I=J=0$). Eq.~\eqref{ineq:N_loc_bis} is thus equivalent to
\ba
\sqrt[N]{|I|}\,+\!\sqrt[N]{|J|} \, \le \, 1 \, , \label{ineq:N_loc_ter}
\ea
which is indeed the Bell inequality for $\mathcal N$-local correlations in a star-shaped network derived in~\cite{tavakoli}.

\section{Appendix~F: \\ Extending the Bell inequality $I_{3322}$ to bilocality}
\renewcommand{\theequation}{F\arabic{equation}}
\setcounter{equation}{0}

In this appendix, we illustrate how our extension technique can be applied to Bell inequalities which feature more than two inputs as well as marginal terms. To give a specific example, we start from the simple Bell inequality $I_{3322}$~\cite{collinsgisin} in the standard Bell scenario (with two parties $\mathcal{A}^1$ and $\mathcal{A}^2$ sharing a source $\mathcal{S}_1$), expressed here in correlation form:
\ba   
 \moy{a^1_{0} a^2_0} +  \moy{a^1_{0} a^2_1} + \moy{a^1_{0} a^2_2} + \moy{a^1_{1} a^2_0} & &  \nonumber  \\
+ \moy{a^1_{1} a^2_1} -\moy{a^1_{1} a^2_2} +  \moy{a^1_{2} a^2_0} - \moy{a^1_{2} a^2_1}  & &\nonumber  \\
-\moy{a^1_{0}}-\moy{a^1_{1}} +\moy{a^2_{0}}+ \moy{a^2_{1}} &  \leq & 4 \, . \label{eq:I3322}
\ea

Adding a new observer $\mathcal{A}^3$, and a new independent source $\mathcal{S}_2$ connected to $\mathcal{A}^2$ and $\mathcal{A}^3$, we arrive at the scenario of bilocality. Applying Theorem~\ref{thm2} to the above Bell inequality, and choosing for instance $\mathcal X^2_+ = \{0\}$ and $\mathcal X^2_- = \{1,2\}$, we find that `bilocal' correlations satisfy the Bell inequality 
\ba
&& \exists \ q \in [0,1] \ \mathrm{  such \ that} \nonumber \\
&& \quad {\textstyle \frac{1}{q}} \moy{(a^1_{0} {+} a^1_{1} {+} a^1_{2} {+} 1 ) a^2_0 {\textstyle \frac{a_0^3 + a_1^3}{2}}} \nonumber \\
&& \ + {\textstyle \frac{1}{1-q} } \big[ \moy{(a^1_{0} {+} a^1_{1} {-} a^1_{2} {+} 1) a^2_1 {\textstyle \frac{a_0^3 - a_1^3}{2}}} + \moy{(a^1_{0} {-} a^1_{1}) a^2_2 {\textstyle \frac{a_0^3 - a_1^3}{2}}}  \big]   \nonumber \\ && \qquad \qquad \qquad - \moy{a^1_{0}} - \moy{a^1_{1}} \ \le \  4  \, . \qquad
\ea
Following a similar procedure to that discussed in Appendix~C, the above inequality (after adding absolute values as above) can be rewritten without quantifier. We obtain the nonlinear form
\ba
\sqrt{|I'|}\,+\!\sqrt{|J'|} \, \le \, \sqrt{|L'|} \, , 
\ea
where we have defined
\ba 
&& I' = \moy{(a^1_{0} {+} a^1_{1} {+} a^1_{2} {+} 1 ) a^2_0 {\textstyle \frac{a_0^3 + a_1^3}{2}}}\,, \\
&& J' = \moy{(a^1_{0} {+} a^1_{1} {-} a^1_{2} {+} 1) a^2_1 {\textstyle \frac{a_0^3 - a_1^3}{2}}}  + \moy{(a^1_{0} {-} a^1_{1}) a^2_2 {\textstyle \frac{a_0^3 - a_1^3}{2}}}\,, \qquad \\
&& L' =  4 +\moy{a^1_{0}}+\moy{a^1_{1}} \,.
\ea

Note that we could also adopt a different choice for $\mathcal X^2_\pm$, or we could exchange the parties $\mathcal{A}^1$ and $\mathcal{A}^2$ when writing the $I_{3322}$ inequality~\eqref{eq:I3322}, which would result in different Bell inequalities for bilocal correlations.

\end{document}